\newtheorem{theorem}{Theorem}[section]
\newtheorem{lemma}[theorem]{Lemma}
\newtheorem{remark}[theorem]{Remark}
\journal{Statistics and Probability Letters}
\begin{document}

\begin{frontmatter}



\title{Directly Specifying the Power Semicircle Distribution}
\author{Hazhir Homei}


\address{Department of Statistics, Faculty of Mathematical
Sciences, University of Tabriz, P.O.Box 5166617766, Iran. Email:
homei@tabrizu.ac.ir, Fax: 0098 411 334 2102.}

\begin{abstract}
A new proof for a newly proved conjecture of Soltani and Roozegar
(2012) is provided; our proof does not make any use of the Stieltjes
transform unlike the proof of  Roozegar and Soltani (2013), 
and the distribution of power semicircle has been directly specified, 
contrary to the authors' claim in (Roozegar and Soltani 2013).
\end{abstract}

\begin{keyword}
Moments\sep Power Semicircle Distribution \sep Arcsin Distribution\sep Randomly Weighted Average. 
\MSC  60E05 \sep 62E15. 
\end{keyword}

\end{frontmatter}



\section{Introduction}
To set the stage, we briefly review the work of Soltani and Roozegar
(2012), who considered a randomly weighted average (RWA) of
independent random variables $X_1,\cdots,X_n$ defined by
$$ S_n = R_{1}X_1 + R_2X_2 + \cdots
+R_{n-1}X_{n-1} + R_nX_n,\;\;\;\; n\geq 2\;,\eqno(1)$$ where
random proportions are $R_{i}=U_{(i)}-U_{(i-1)}$, for $i=1,\cdots,n-1$, and $R_n= 1-\sum_{i=1}^{n-1} R_i$; and
$U_{(1)},\cdots,U_{(n-1)}$ are order statistics of a random sample
$U_1,\cdots,U_n$ from a uniform
distribution on [0,1], also $U_{(0)}=0$ and $U_{(n)}=1$.
If the random variables $X_1,\cdots,X_n$ are independent and have
common Arcsin distribution on $(-a,a)$, then $S_n$ will have a power
semicircle distribution on $(-a,a)$ with $\lambda=\frac{(n-1)}{2}$, i.e.
$$f(x;\lambda,a)=\frac{1}{\sqrt \pi {a^{2\lambda}}}\frac{\Gamma(\lambda+1)}{\Gamma(\lambda+\frac{1}{2})}(a^2-x^2)^{\lambda-\frac{1}{2}},\;\;\;|x|<a, a>0.$$
The authors established the cases of $n=2,3,4$ in Example 1 of Soltani and
Roozegar (2012), and later proved it for all $n$'s in Roozegar and Soltani (2013). In this short note, we will give a new proof for this identity, for every $n$, by using moments, and without making use of the Stieltjes transforms. 

\section{The Proof}
In order to prove the identity, we need the following lemma.

\begin{lemma}
For all positive integers $r \in \mathbb N$, we
have
$$\sum_{i_1+\cdots+i_n=r}^{}{r\choose
i_{1}, i_{2}, \cdots, i_{n}}
\frac{\Gamma(a_1+i_1)}{\Gamma(a_1)}\cdots\frac{\Gamma(a_n+i_n)}{\Gamma(a_n)}=\frac{\Gamma(\sum_{j=1}^{n}a_j+r)}
{\Gamma(\sum_{j=1}^{n}a_j)}.$$
\end{lemma}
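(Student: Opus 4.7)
The plan is to recognize the quantity $\Gamma(a+i)/\Gamma(a)$ as the rising factorial (Pochhammer symbol) $(a)_i = a(a+1)\cdots(a+i-1)$, and then prove the identity via generating functions. The cleanest route is Newton's generalized binomial series
$$(1-x)^{-a}=\sum_{i=0}^{\infty}\frac{(a)_i}{i!}\,x^{i},$$
valid for $|x|<1$ and any real $a$. Writing the left-hand side of the lemma with $(a_j)_{i_j}$ in place of $\Gamma(a_j+i_j)/\Gamma(a_j)$ and expanding the multinomial coefficient as $r!/(i_1!\cdots i_n!)$, one sees that the sum is precisely $r!$ times the coefficient of $x^r$ in the product
$$\prod_{j=1}^{n}\sum_{i_j=0}^{\infty}\frac{(a_j)_{i_j}}{i_j!}\,x^{i_j}=\prod_{j=1}^{n}(1-x)^{-a_j}=(1-x)^{-\sum_{j=1}^{n}a_j}.$$

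Applying the same binomial expansion to the right-hand product, the coefficient of $x^r$ is $(\sum_j a_j)_r / r!$, so multiplying by $r!$ yields exactly $\Gamma(\sum_j a_j + r)/\Gamma(\sum_j a_j)$, which is the right-hand side of the lemma. This completes the proof.

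As a secondary route (useful if one wants to stay within elementary combinatorics and avoid infinite series), I would argue by induction on $n$. The case $n=1$ is immediate. For the inductive step, collapse the inner sum by fixing $i_n=k$ and letting $i_1+\cdots+i_{n-1}=r-k$; the inductive hypothesis collapses $n-1$ of the Pochhammer factors into $(\sum_{j=1}^{n-1}a_j)_{r-k}$, after which the remaining sum is the classical Chu–Vandermonde identity
$$\sum_{k=0}^{r}\binom{r}{k}(b)_{r-k}(a_n)_k=(b+a_n)_r,\qquad b=\textstyle\sum_{j=1}^{n-1}a_j.$$

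There is no serious obstacle here; the only thing to be careful about is the bookkeeping that turns the multinomial coefficient into a product of binomial coefficients in the inductive step, or equivalently the rearrangement of the Cauchy product of $n$ power series in the generating-function proof. I would favor the generating-function version for brevity, since it handles arbitrary $n$ in one line and needs no appeal to Chu–Vandermonde as a black box.
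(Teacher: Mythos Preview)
Your proof is correct; the generating-function route via $(1-x)^{-a}=\sum_{i\ge 0}(a)_i\,x^i/i!$ is clean and handles all $n$ at once, and your inductive alternative through Chu--Vandermonde is equally valid.

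The paper, however, takes a genuinely different route: it reads the identity probabilistically as the statement that the Dirichlet--multinomial probability mass function sums to one. Concretely, if $(X_1,\ldots,X_n)\mid(p_1,\ldots,p_n)$ is multinomial with $r$ trials and cell probabilities $(p_1,\ldots,p_n)$, and $(p_1,\ldots,p_n)$ has a Dirichlet$(a_1,\ldots,a_n)$ prior, then integrating out the $p$'s gives the marginal mass at $(i_1,\ldots,i_n)$ as
\[
\binom{r}{i_1,\ldots,i_n}\,\frac{\Gamma(\sum_j a_j)}{\Gamma(\sum_j a_j+r)}\prod_{j=1}^{n}\frac{\Gamma(a_j+i_j)}{\Gamma(a_j)},
\]
and summing this to $1$ over $i_1+\cdots+i_n=r$ is exactly the lemma. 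The paper's argument is elegant and fits the probabilistic theme of the article, but it implicitly needs $a_j>0$ so that the Dirichlet prior is a bona fide distribution. Your analytic argument is parameter-agnostic---it establishes the Pochhammer identity for arbitrary real (or complex) $a_j$ without any positivity restriction---and is self-contained in that it does not presuppose the form of the Dirichlet--multinomial law.
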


\begin{proof}
Let the distribution of $f(x|p)$ be multinomial with the
parameters $(p_1,\cdots,p_n)$, and assume that $(p_1,\cdots,p_n)$ has dirichlet distribution with the parameters $(a_1,\cdots,a_n)$. So, the
distribution of $f(x)$ can be calculated, and the lemma is proved
considering the fact that the sum of $f(x)$ on its
support equals to one.
\end{proof}

\begin{theorem}
Assume that $a=1$ and that the random
variables $X_1,\cdots,X_n$ are independent and have common Arcsin
distribution on (-1,1). Then $S_n$ will have a power semicircle distribution on
(-1,1) with $\lambda=\frac{n-1}{2}$, i.e.
$$f(x;\lambda)=\frac{1}{\sqrt \pi}\frac{\Gamma(\lambda+1)}{\Gamma(\lambda+\frac{1}{2})}(1-x^2)^{\lambda-\frac{1}{2}}\;\;\;|x|<1.$$
\end{theorem}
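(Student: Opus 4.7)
The plan is to prove the identity by matching moments. Both $S_n$ and the candidate power semicircle law are supported on the compact interval $(-1,1)$, hence are determined by their moment sequences. Symmetry of the Arcsin law together with evenness of the target density kills all odd moments on both sides, so it suffices to check, for each $m\ge 0$, that
$$E\bigl[S_n^{2m}\bigr]=\frac{\Gamma((n+1)/2)\,\Gamma(m+1/2)}{\sqrt\pi\,\Gamma(m+(n+1)/2)},$$
which is the $2m$-th moment of the power semicircle with $\lambda=(n-1)/2$ (a routine Beta integral via the substitution $u=x^{2}$).

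First I expand with the multinomial theorem, using independence of the spacings from the $X_{j}$'s:
$$E\bigl[S_n^{2m}\bigr]=\sum_{i_1+\cdots+i_n=2m}{2m\choose i_1,\ldots,i_n}\,E\bigl[R_1^{i_1}\cdots R_n^{i_n}\bigr]\prod_{j=1}^{n}E\bigl[X_j^{i_j}\bigr].$$
Only the terms with every $i_j$ even survive; set $i_j=2k_j$ with $\sum k_j=m$. The vector $(R_1,\ldots,R_n)$ is Dirichlet$(1,\ldots,1)$, so $E\bigl[\prod R_{j}^{2k_{j}}\bigr]=\Gamma(n)\prod(2k_{j})!/\Gamma(n+2m)$, and the Arcsin moments are $\Gamma(k_{j}+1/2)/(\sqrt\pi\,k_{j}!)$. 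The factors $(2k_{j})!$ cancel against the multinomial coefficient, leaving
$$E\bigl[S_n^{2m}\bigr]=\frac{(2m)!\,\Gamma(n)}{\pi^{n/2}\,\Gamma(n+2m)}\sum_{k_1+\cdots+k_n=m}\prod_{j=1}^{n}\frac{\Gamma(k_{j}+1/2)}{k_{j}!}.$$

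Now I invoke the lemma with $a_{1}=\cdots=a_{n}=1/2$: inserting the factor $m!/\prod k_{j}!={m\choose k_1,\ldots,k_n}$ and using $\Gamma(1/2)=\sqrt\pi$ collapses the inner sum to $(\pi^{n/2}/m!)\,\Gamma(n/2+m)/\Gamma(n/2)$. Substituting back yields
$$E\bigl[S_n^{2m}\bigr]=\frac{(2m)!\,\Gamma(n)\,\Gamma(n/2+m)}{m!\,\Gamma(n+2m)\,\Gamma(n/2)}.$$
Three applications of Legendre's duplication formula $\Gamma(2z)=2^{2z-1}\Gamma(z)\Gamma(z+1/2)/\sqrt\pi$, to $\Gamma(2m+1)$, $\Gamma(n)$, and $\Gamma(n+2m)$, make every power of $2$ and every stray $\sqrt\pi$ cancel, producing the required expression.

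The main obstacle is clerical: recasting the Dirichlet and Arcsin moment factors into exactly the shape the lemma demands, and then orchestrating the three duplication-formula identities so that the $2^{k}$'s and $\sqrt\pi$'s all cancel on the nose. Everything else—moment determinacy on a compact interval, the symmetry reduction to even moments, and the Beta integral for the target moments—is standard.
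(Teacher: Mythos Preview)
Your proposal is correct and follows essentially the same route as the paper: multinomial expansion of $S_n^{2m}$, Dirichlet$(1,\ldots,1)$ moments for the spacings, Arcsin even moments, reduction to even indices, application of the lemma with $a_j=1/2$, and simplification to the power-semicircle moments. You are slightly more explicit than the paper in invoking moment determinacy on a compact support and in spelling out the Legendre duplication step, which the paper compresses into a single displayed equality.
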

\begin{proof}
First, we find the $r^{\rm th}$ moment of $S_n$ as follows:
$$E({S_n}^r)=\sum_{i_1+\cdots+i_n=r}^{}\frac{r!}{{i_1}!\cdots {i_n}!}E({R_1}^{i_1}\cdots {R_n}^{i_n})E({X_1}^{i_1})\cdots E({X_n}^{i_n}).$$
By using the dirichlet distribution, we have
$$E({S_n}^r)=\sum_{i_1+\cdots+i_n=r}^{}\frac{r!}{{i_1}!\cdots {i_n}!}{(n-1)!\frac{\Gamma(i_1+1)\cdots \Gamma(i_n+1)}{\Gamma(r+n)}}E({X_1}^{i_1})\cdots E({X_n}^{i_n}).$$
It is well known that
$$E({X_j}^{i_j})=\frac{1}{2}\frac{(1+(-1)^{i_j})
\Gamma(\frac{1}{2}+\frac{i_j}{2})}{\sqrt{\pi}\Gamma(1+\frac{i_j}{2})}, \textrm{ for } j=1,\cdots,n.$$
So, $E({S_n}^r)=$
$$\sum_{i_1+\cdots+i_n=r}^{}\frac{r!}{{i_1}!\cdots {i_n}!}{(n-1)!
\frac{\Gamma(i_1+1)\cdots \Gamma(i_n+1)}{\Gamma(r+n)}}$$
$${\frac{1}{2}\frac{(1+(-1)^{i_1})\Gamma(\frac{1}{2}+\frac{i_1}{2})}{\sqrt{\pi}\Gamma(1+\frac{i_1}{2})}}\cdots
\frac{1}{2}\frac{(1+(-1)^{i_n})\Gamma(\frac{1}{2}+\frac{i_n}{2})}{\sqrt{\pi}\Gamma(1+\frac{i_n}{2})}.$$
Since Arcsin distribution is symmetric about zero, the $r^{\rm th}$
moment is zero for odd r. Now we note that for even $r(=2k)$ if
$i_1+\cdots+i_n =r = 2k$ and one of ${i_{j}}'{s}$ is odd then
$1+(-1)^{i_j}=0$ so the corresponding summand will equal to zero.
Hence, we assume all ${i_{j}}'{s}$ to be even, so we write $2i_j$ in
place of $i_j$. Thus,
$$E({S_n}^{2k})=\sum_{i_1+\cdots+i_n=k}^{}\frac{(2k)!}{(2{i_1})!\cdots (2{i_n})!}(n-1)!$$
$$\frac{\Gamma(2{i_1}+1)\cdots\Gamma(2{i_n}+1)}{\Gamma(2k+n)}
\frac{1}{2}\frac{2\Gamma(\frac{1}{2}+i_1)}{\Gamma(\frac{1}{2})\Gamma(1+i_1)}\cdots\frac{1}{2}
\frac{2\Gamma(\frac{1}{2}+i_n)}{\Gamma(\frac{1}{2})\Gamma(1+i_n)}$$
$$=\frac{(2k)!(n-1)!}{\Gamma(2k+n)}\sum_{i_1+\cdots+i_n=r}^{}\frac{1}{{i_1}!\cdots{i_n}!}
\frac{\Gamma(\frac{1}{2}+i_1)}{\Gamma(\frac{1}{2})}\cdots
\frac{\Gamma(\frac{1}{2}+i_n)}{\Gamma(\frac{1}{2})}$$
$$=\frac{(2k)!(n-1)!}{\Gamma(2k+n)k!}\sum_{i_1+\cdots+i_n=r}^{}\frac{k!}{{i_1}!\cdots{i_n}!}
\frac{\Gamma(\frac{1}{2}+i_1)}{\Gamma(\frac{1}{2})}\cdots\frac{\Gamma(\frac{1}{2}+i_n)}{\Gamma(\frac{1}{2})}.$$
By using Lemma 2.1 we find that
$$E({S_n}^{2k})=\frac{(2k)!(n-1)!}{\Gamma(2k+n)k!}\frac{\Gamma(\frac{n}{2}+k)}{\Gamma(\frac{n}{2})}=
\frac{1}{\Gamma(\frac{1}{2})}\frac{\Gamma(\frac{n}{2}+\frac{1}{2})\Gamma(k+\frac{1}{2})}
{\Gamma(k+\frac{n}{2}+\frac{1}{2})}$$
So,
$$E({S_n}^r)=\left\{
\begin{array}{cl}
  0  & \textrm{ if }\ r=2k+1, \\
  \frac{\Gamma({k}+\frac{1}{2})\Gamma(\frac{n}{2}+\frac{1}{2})}
  {\sqrt{\pi}\Gamma({k}+\frac{n}{2}+\frac{1}{2})} & \textrm{ if }\ r=2k,
\end{array}
\right.
$$ 
is the $r^{\rm th}$ moment of the power semicircle distribution, and this  proves the theorem.
\end{proof}

\begin{remark}{\rm The restriction $a=1$ in Theorem 2.2. was just for
simplicity. In fact, a very similar argument can prove the theorem
for arbitrary $a$'s.}
\end{remark}

\end{document}